\newtheorem{teo}{Theorem}[section]
\newtheorem{lema}[teo]{Lemma}
\newtheorem{prop}[teo]{Proposition}
\newtheorem{defi}[teo]{Definition}
\newcommand{\R}{\mathbb R}
\newcommand{\norm}[1]{\left\Vert#1\right\Vert}
\newcommand{\abs}[1]{\left\vert#1\right\vert}
\newcommand{\set}[1]{\left\{#1\right\}}
\newcommand{\eps}{\varepsilon}
\newcommand{\trace}[1]{\mbox{tr}\left( #1 \right)}
\newcommand{\ket}[1]{\vert#1\rangle}
\newcommand{\bra}[1]{\langle#1\vert}
\newcommand{\refe}[1]{(\ref{#1})}
\newcommand{\interior}[1]{\mbox{int}\left({#1}\right)}
\newcommand{\UNICAMP}{Departamento de Matem\'atica Aplicada, Universidade Estadual de Campinas, Campinas, SP 13083-859,
Brazil}
\begin{document}

\title{Global convergence of diluted iterations in maximum-likelihood quantum tomography }

\author{D. S. Gon\c{c}alves}
\email[]{douglas.goncalves@irisa.fr}
\affiliation{\UNICAMP}
\affiliation{IRISA, University of Rennes 1, Rennes, France}

\author{M. A. Gomes-Ruggiero}
\affiliation{\UNICAMP}

\author{C. Lavor}
\affiliation{\UNICAMP}
     
\date{\today}

\begin{abstract}
\noindent In this paper we present an inexact stepsize selection for the Diluted $R \rho R$ algorithm \cite{rehacek2007}, used to obtain the maximum likelihood estimate to the density matrix in quantum state tomography. We give a new interpretation for the diluted $R \rho R$ iterations that allows us to prove the global convergence under weaker assumptions. Thus, we propose a new algorithm which is globally convergent and suitable for practical implementation.  \\
\\
\noindent PACS number(s): 03.65.Wj
\end{abstract}
\maketitle
\section{Introduction}
In quantum state tomography, the aim is to find an estimate for the density matrix associated to the ensemble of identically prepared quantum states, based on measurement results \cite{hradil1997, parisqse2004, dgoncalves2013}. This is an important procedure in quantum information and computation, for example, to verify the fidelity of the prepared state \cite{ncbook, clavor2011} or in quantum process tomography \cite{maciel2012}. 

Besides the experimental design to get a tomographically complete set of measurements, post processing routines are required to recover information from the measurement results. Some approaches are based on direct inversion of the data while others rest in statistical based methods. For a survey, the reader can see \cite{parisqse2004}.

Among the statistical based methods, the Maximum Likelihood estimation (ML) \cite{parisqse2004, hradil2004} has been often used by experimentalists \cite{james2001}. The Maximum Likelihood estimate for the density matrix is that one which maximizes the probability of the observed data. In \cite{hradil1997,hradil2004}, it was proposed an iterative procedure to solve the problem of the maximum likelihood estimation for density matrices. We refer to this procedure as the $R \rho R$ algorithm. The main properties of the $R \rho R$ algorithm are: keeping the positivity and unit trace of the iterates and its low computational cost, involving only matrix products at each iteration. 

Although in practice the $R \rho R$ method works in most of the cases, there is no theoretical guarantee of convergence, regardless the dataset and the initial point. In \cite{rehacek2007}, the authors presented an example where the method gets into a cycle. In the same work, they proposed some kind of relaxation of the $R \rho R$ iterations, controlling the step size at each iteration by a positive parameter $t$. They called this kind of iterations as Diluted $R \rho R$ iterations. It was proved that the Diluted $R \rho R$ method converges to the maximum likelihood solution if, at each iteration, the optimal value of the step size $t$ is chosen. 

However, to find the optimal value of $t$ means to solve another optimization problem at each iteration, which  represents an undesirable additional computational cost in practice. This issue was remarked in \cite{rehacek2007}, where it was suggested some heuristics in order to get some reasonable guess for the step size $t$ in practical implementations, but loosing the convergence warranty. 

In this work we propose a new stepsize selection procedure which is reliable and feasible in practice. We give a new interpretation to the Diluted $R \rho R$ iteration, where the search direction is a combination of two ascent directions controlled by the step size $t$. This allow us to apply an inexact line search to determine the step length. Instead of the optimal value of $t$, at each iteration, it is enough to find a value which ensures a sufficient improvement in the likelihood function in order to prove the global convergence. We propose an algorithm, using an Armijo-like condition \cite{bertsekas1999,nocedal1999} and a backtracking procedure, and prove that it is globally convergent and also computationally practicable. 

This paper is organized as follows. Section \ref{theoryrrhor} reviews the theory of the $R \rho R$ algorithms for quantum tomography. The concepts of nonlinear optimization used to prove the convergence of the Diluted $R \rho R$ are presented in Section \ref{globaltheory}.  Section \ref{main} presents the proof of global convergence of the Diluted $R \rho R$ algorithm under line search and Armijo condition. Examples illustrating the differences and similarities of our proposal to the traditional fixed step length are presented in Section \ref{examples}. Section \ref{final} closes this work with some final considerations. 

\section{$R \rho R$ iterations for quantum tomography}\label{theoryrrhor}
In this section we address the theory and motivation behind the $R \rho R$ and the Diluted $R \rho R$ algorithms, following the references \cite{hradil1997,hradil2004,rehacek2007}. 

Here we consider measurements described by a POVM set $\set{E_i}_i$, where $E_i$ are semidefinite positive operators which sum to the identity. The relation between the density matrix $\rho$ and the probability outcomes is given by the Born's rule \cite{parisqse2004}:
$$
p_i(\rho) = \trace{E_i \rho}.
$$ 
Linear inversion methods equate the predicted probabilities $\set{p_i(\rho)}_i$ with the experimental data $\set{f_i}_i$:
$$
f_i = \trace{E_i \rho}, \forall i
$$ 
and the inversion of these linear equations gives an estimate of $\rho$. The main problem with this approach is that, in general, the frequencies are noisy and this fact can leads to a matrix $\rho$ outside the density matrix space (the Hermitian semidefinite positive trace one matrices). 

Among the statistical based methods, the Maximum Likelihood estimation \cite{parisqse2004, hradil2004} has been often used by experimentalists \cite{james2001}. Let us denote $\rho^{\dagger}$ the conjugate transpose of $\rho$ and $\rho \succeq 0$ to say that the Hermitian matrix $\rho$ is semidefinite positive (or $\rho \succ 0$ for a strictly positive matrix). The ML estimation searches within the density matrix space:
$$
{\cal S} = \set{\rho\ |\ \rho = \rho^{\dagger}, \ \rho \succeq 0, \trace{\rho}=1},
$$
that one which maximizes the likelihood function. The likelihood function is the probability of getting the observed data given the density matrix $\rho$. A common used likelihood \cite{hradil1997,hradil2004}, for a given data set $\set{f_i}$, is
$$
{\cal L}(\rho) \propto \prod_i p_i(\rho)^{N f_i},
$$
and since the log-likelihood is more tractable, our goal is to find $\rho$ that solves the problem
\begin{equation}
\label{mlprob}
\begin{aligned}
\max_{\rho} & \ \ \sum_i f_i \log p_i(\rho) & \equiv \ F(\rho) \\
\mbox{s.t} & \ \ \trace{\rho} = 1 \\
\ & \ \ \rho \succeq 0.
\end{aligned}
\end{equation}
\noindent The maximization of the objective function $F(\rho)$ in \refe{mlprob} is constrained to the density matrix space ${\cal S}$ which is the intersection of the semidefinite positive cone $\rho \succeq 0$ with the affine subspace $\trace{\rho}=1$. The constraints may motivate one to try semidefinite programming (SDP) methods \cite{klerkbook2002} for solving \refe{mlprob}, but efficient solvers \cite{sdpt3,sedumi} are available only for linear and quadratic objective functions. 

Other methods are based on the reparameterization \cite{james2001,goncalves2012} of the matrix variable $\rho = \rho(\theta)$ in order to automatically fulfill the constraints and then to solve an unconstrained maximization problem in the new variable $\theta$. However, generic numerical optimization methods are often slow when the number of parameters $d^2$ ($d$ is the dimension of the Hilbert space) is large. 

Here, we study an alternative algorithm, proposed in \cite{hradil2004}, 
which takes advantage of the structure of the problem \refe{mlprob} and has good convergence properties. 

Consider the gradient of the objective function $F(\rho)$, given by
\begin{equation}
\label{gradr}
\nabla F(\rho) = \sum_i \frac{f_i}{\trace{E_i \rho}} E_i \equiv R(\rho),
\end{equation}
and let $\interior{\cal S}$ be the interior of ${\cal S}$, that is
$$
\interior{\cal S} = \set{\rho \in {\cal S}\ |\ \rho \succ 0}.
$$
As it was shown in \cite{hradil2004}, a matrix $\rho \in \interior{\cal S}$ solves \refe{mlprob} if it satisfies the extremal equation
\begin{equation}
\label{extremal}
R(\rho) \rho = \rho,
\end{equation}
or equivalently
\begin{equation}
\label{extremal2}
R(\rho) \rho R(\rho) = \rho.
\end{equation}
If the density matrix $\rho$ is restricted  to diagonal matrices, the equation \refe{extremal} can be solved by the expectation-maximization (EM) algorithm \cite{vardi1993}.  The EM algorithm is guaranteed  to increase the likelihood at each step and converges to a fixed point of \refe{extremal}. However, the EM algorithm cannot be applied to the quantum problem, because without the diagonal constraint it does not preserve the positivity of the density matrix. In \cite{hradil2004}, it was proposed an iterative procedure based on the equation \refe{extremal2} instead. Let $k$ be the iteration index, and so, $\rho^k$ the current approximation to the solution. An iteration of the $R \rho R$ algorithm is given by:
$$
\rho^{k+1} = {\cal N}\,R(\rho^k) \rho^k R(\rho^k),
$$
\noindent where ${\cal N}$ is the normalization constant which ensures unit trace.  

Notice that the positivity is explicitly preserved at each step. Another remarkable property of the $R \rho R$ algorithm is its computational cost: at each iteration, it is just required to compute a matrix-matrix product. This is a quite cheap iteration in contrast with the iteration of an semidefinite programming method. 

Although the $R \rho R$ algorithm is a generalization of the EM algorithm, its convergence is not guaranteed in general. In \cite{rehacek2007}, it was presented a counterexample where the method produces a cycle. For this reason, in that work was proposed the diluted iteration of the $R \rho R$ algorithm, or simply ``Diluted $R \rho R$''. 

The idea is to control each iteration step by mixing the operator $R(\rho)$ with the identity operator:
\begin{equation}
\label{dilutedit}
\rho^{k+1} = {\cal N} \left[ \frac{I + t R(\rho^k)}{1 + t} \right] \rho^k \left[ \frac{I + t R(\rho^k)}{1 + t} \right],
\end{equation}
where $t>0$ and ${\cal N}$ is the normalization constant. It is important to observe that as $t \rightarrow \infty$, the iteration tends to the original $R \rho R$ iteration. Moreover, when $t>0$ is sufficient small, it was proved that the likelihood function is strictly increased, whenever $R(\rho)\rho \ne \rho$. It was also shown that the ``Diluted $R \rho R$'' is convergent to the ML density matrix, if the initial approximation is the maximally mixed state $\rho^0=(1/d)I$  and the optimal value of $t$:
\begin{equation}
\label{els}
t = \mbox{arg}\max_{t>0} F(\rho^{k+1}(t)),
\end{equation}
\noindent is used at each iteration. 

In nonlinear optimization \cite{bertsekas1999,nocedal1999}, this is called exact line search. Though the convergence can be achieved using this procedure, in general, solving \refe{els} may be computationally  demanding. Albeit in \cite{rehacek2007} the authors proved the convergence with the exact line search, they suggest that, in practice, one could use an ad hoc scheme to determine the ``best'' value of the steplength $t$ to be used through all iterations. 

Here, instead of \refe{els}, we propose an inexact line search to determine the steplength $t$ in each iteration \refe{dilutedit}. We do not search the best possible $t>0$, but one that ensures a sufficient improvement in the log-likelihood. We prove that this procedure is well-defined and that the iterations \refe{dilutedit} converge to a solution of \refe{mlprob}, from any positive initial matrix  $\rho^0$. The implementation of the inexact line search is straightforward and we also present some examples showing the improvements, against an ad hoc fixed $t$ strategy.

\section{Global convergence theory for ascent direction methods}\label{globaltheory} 
The purpose of this section is to expose some basic concepts of nonlinear optimization which are necessary to prove the global convergence of the Diluted $R \rho R$ algorithm under an inexact line search scheme. These concepts are classical for the optimization community and are detailed in \cite{nocedal1999,bertsekas1999}. To make it easier, we have adapted these concepts using the quantum tomography notation. 

Consider the following maximization problem over the set of Hermitian matrices ${\cal H}$:
\begin{equation}
\label{maxprob}
\begin{aligned}
\max_{\rho} & \ \ F(\rho) \\
\mbox{s.t} & \ \ \rho \in \Omega,
\end{aligned}
\end{equation}
where $f: {\cal H} \rightarrow \R$ is a continuously differentiable function and $\Omega \subset {\cal H}$ is a convex set. 

Given an approximation $\rho^k$ for the solution of problem \refe{maxprob}, ascent direction methods try to improve the current objective function value generating an ascent direction $D^k$ and updating the iterate
\begin{equation}
\label{iteration}
\rho^{k+1} = \rho^k + t_k D^k,
\end{equation}
where $t_k$ is called stepsize or steplength. 
\begin{defi}
A direction $D^k$ is an {\it ascent direction} at the iterate $\rho^k$ if 
$$
\trace{\nabla F(\rho^k) D^k} > 0,
$$
and this ensures that, for a sufficient small $t_k>0$, the function value is increased. An ascent direction $D^k$ is {\it feasible} if $\rho^{k+1}$, belongs to $\Omega$ for $t_k \in (0,\eps)$, for some $\eps>0$. 
\end{defi}
One of the insights of this work is that the diluted $R \rho R$ iteration \refe{dilutedit} can be written as an  ascent direction iteration \refe{iteration} and so, using the theory of this section, we can prove the global convergence. 

The iteration \refe{iteration} can be repeated while there exists a feasible ascent direction. If at some point $\rho^*$ there is no feasible ascent direction, then $\rho^*$ is a {\it stationary point}. It is well known that every local maximizer is a stationary point, but the converse is not true in general. If the function $f$ is concave on the convex set $\Omega$, then a stationary point is also a maximizer. 

A maximization algorithm for the problem \refe{maxprob} is called {\it globally convergent} \cite{bertsekas1999, nocedal1999} if every limit point of the sequence generated by the algorithm is a stationary point, regardless the initial approximation $\rho^0$. Although feasible ascent directions ensure that, for a sufficient small $t_k > 0$, we can increase the function value, this is not enough to ensure the global convergence. The reason is that a simple increase in the objective function, $F(\rho^{k+1})>F(\rho^k)$, along an ascent direction is a too modest objective. In order to achieve local maximizers, or at least stationary points, a {\it sufficient} increase at each iteration is required. 

Of course that a natural choice for the steplength $t_k$, along the direction $D^k$, is the solution of the problem:
\begin{equation}
\label{exls}
t_k = \mbox{argmax}_{t} \ F(\rho^k + t D^k),
\end{equation}
that is called {\it exact line search}. However, finding the global maximizer of $f$ along the direction $D^k$ is itself a hard problem, and unless the function $f$ has a special structure such as a quadratic function, for instance, the computational effort is considerable. 

To avoid the considerable computational effort in the exact line search \refe{exls}, an inexact line search can be performed. A natural scheme is to consider successive stepsize reductions. Since the search is on a ascent direction, eventually for a small $t_k$, we can obtain $F(\rho^k + t_k D^k) > F(\rho^k)$. But, this simple increase can not eliminate some convergence difficulties. One possible strategy is the use of the {\it Armijo rule}, which asks for a steplength $t$ such that a sufficient improvement in the objective function is obtained:
\begin{equation}
\label{armijo}
F(\rho^k + t D^k) > F(\rho^k) + \gamma t \ \trace{\nabla F(\rho^k) D^k},
\end{equation}
where $\gamma \in (0,1)$. We can decrease the steplength $t$ until the condition \refe{armijo} is verified. There are other alternatives to the successive stepsize reduction, for instance, strategies based on quadratic or cubic interpolation \cite{nocedal1999}. 

Besides the steplength selection, requirements on the ascent directions $D^k$ are also necessary to avoid certain  problems. For example, it is not desirable to have directions $D^k$ with small norm when we are far from the solution. It is also necessary to avoid that the sequence of directions $\set{D^k}$ become orthogonal to the gradient of $f$, because, in this case, we are in directions of almost zero variation where too small or none improvement on the objective function can be reached. A general condition that avoid such problems is called {\it gradient related} condition \cite{bertsekas1999}.

\begin{defi}[Gradient related]
A sequence of directions $\set{D^k}$ is gradient related if for any subsequence $\set{\rho^k}_{k \in {\cal K}}$ that converges to a nonstationary point, the corresponding subsequence $\set{D^k}_{k \in {\cal K}}$ is bounded and satisfies
$$
\lim_{k \rightarrow \infty} \inf_{k \in {\cal K}} \trace{ \nabla F(\rho^k) D^k} > 0.
$$
\end{defi}
\noindent This condition means that $\norm{D^k}$ does not become 'too small' or 'too large'  relative to $\norm{\nabla F(\rho^k)}$, and that the $D^k$ and $\nabla F(\rho^k)$ do not become orthogonal. 

If an algorithm generates ascent directions satisfying the gradient related condition and the stepsizes are selected according to the Armijo rule, then it is possible to prove the global convergence \cite{bertsekas1999}.
\begin{teo}[Global convergence]
\label{teogc}
Let $\set{\rho^k}$ be a sequence generated by a feasible ascent directions method $\rho^{k+1} = \rho^k + t_k D^k$, and assume that $\set{D^k}$ is gradient related and $t_k$ is chosen by the Armijo rule. Then, every limit point of $\set{D^k}$ is a stationary point.
\end{teo}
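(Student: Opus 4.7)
The plan is a proof by contradiction along the standard Bertsekas lines, carefully using the gradient related condition together with the Armijo backtracking.

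First I would suppose, for contradiction, that some limit point $\bar\rho$ of $\set{\rho^k}$ is \emph{not} stationary, and extract a subsequence $\set{\rho^k}_{k\in{\cal K}}$ converging to $\bar\rho$. Since every step satisfies the Armijo condition \refe{armijo} with $\trace{\nabla F(\rho^k)D^k}>0$, the full sequence $\set{F(\rho^k)}$ is strictly increasing, and by continuity of $F$ it converges to $F(\bar\rho)$; in particular $F(\rho^{k+1})-F(\rho^k)\to 0$. Combining this with \refe{armijo} gives
\begin{equation*}
t_k\,\trace{\nabla F(\rho^k)D^k}\ \longrightarrow\ 0.
\end{equation*}

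Next I would use that $\set{D^k}$ is gradient related along the subsequence converging to the nonstationary $\bar\rho$: this forces $\liminf_{k\in{\cal K}}\trace{\nabla F(\rho^k)D^k}>0$, so the only way the product above can vanish is $t_k\to 0$ for $k\in{\cal K}$. The key exploitation of the backtracking structure comes now: once $t_k$ is strictly smaller than the initial trial step, there is a preceding candidate $\hat t_k = t_k/\beta$ (with $\beta\in(0,1)$ the backtracking factor) that failed the Armijo test, namely
\begin{equation*}
F(\rho^k+\hat t_k D^k)-F(\rho^k)\ \leq\ \gamma\,\hat t_k\,\trace{\nabla F(\rho^k)D^k}.
\end{equation*}
Applying the mean value theorem to the left-hand side yields some $\xi^k\in(0,\hat t_k)$ with
\begin{equation*}
\trace{\nabla F(\rho^k+\xi^k D^k)\,D^k}\ \leq\ \gamma\,\trace{\nabla F(\rho^k)D^k}.
\end{equation*}

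Finally I would pass to the limit. Since $\set{D^k}$ is bounded (gradient related condition) one can refine ${\cal K}$ to a further subsequence on which $D^k\to\bar D$. Because $\hat t_k\to 0$ and $D^k$ is bounded, $\rho^k+\xi^k D^k\to\bar\rho$, and continuity of $\nabla F$ gives $\trace{\nabla F(\bar\rho)\bar D}\leq\gamma\,\trace{\nabla F(\bar\rho)\bar D}$. Since $\gamma\in(0,1)$, this forces $\trace{\nabla F(\bar\rho)\bar D}\leq 0$, contradicting the gradient related lower bound $\liminf_{k\in{\cal K}}\trace{\nabla F(\rho^k)D^k}>0$ together with continuity. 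This contradiction shows that $\bar\rho$ must be stationary.

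The main obstacle I anticipate is the bookkeeping around the backtracking: to produce the failed predecessor $\hat t_k=t_k/\beta$ one must know that the algorithm actually \emph{tried} a larger step and rejected it, which requires stating the Armijo procedure with a fixed initial trial step (e.g.\ $t=1$) and a reduction factor $\beta$. I would make this explicit when introducing the algorithm so that, for $k$ large enough that $t_k<1$, the preceding trial $\hat t_k$ and its violation of \refe{armijo} are available. Everything else is continuity of $F$ and $\nabla F$ and the clean definition of the gradient related condition already stated in the text.
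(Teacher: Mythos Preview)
Your argument is correct and is precisely the standard proof of \cite[Proposition~2.2.1]{bertsekas1999}, which is exactly what the paper invokes: the paper does not give its own proof of Theorem~\ref{teogc} but simply refers the reader to Bertsekas. One small point of alignment: in the paper's backtracking (Step~2 of Algorithm~1) the reduction factor lies in an interval $[\alpha_0,\alpha_1]\subset(0,1)$ rather than being a fixed $\beta$, so the rejected predecessor satisfies $t_k/\alpha_1\le \hat t_k\le t_k/\alpha_0$; this changes nothing in your limit argument since $\hat t_k\to 0$ still follows from $t_k\to 0$.
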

\begin{proof}
See \cite[Proposition 2.2.1]{bertsekas1999}.
\end{proof}

\section{Convergence of the Diluted $R \rho R$}\label{main}
Sections \ref{theoryrrhor} and \ref{globaltheory} gave us the necessary background to show the convergence of the diluted $R \rho R$ iterations using an inexact line search to determine the stepsize $t$. In this section, firstly we show that the diluted iteration \refe{dilutedit} can be written as an ascent direction iteration \refe{iteration}. So, we give a geometrical interpretation and prove that the corresponding sequence of directions $\set{D^k}$ is gradient related. Finally, using the Armijo condition and a backtracking procedure, we present an algorithm which is globally convergent following the Theorem \ref{teogc}. 

From now on, we will use the notation $\nabla F(\rho)$ instead of $R(\rho)$. So, the equation \refe{dilutedit} becomes:
\begin{equation}
\rho^{k+1}  = {\cal N} \left[ \frac{I + t \nabla F(\rho^k)}{1 + t} \right] \rho^k \left[ \frac{I + t \nabla F(\rho^k)}{1 + t} \right]
\end{equation}
or 
\begin{equation}
\rho^{k+1} = \frac{(I + t \nabla F(\rho^k) )\, \rho^k \,(I + t \nabla F(\rho^k) )}{\trace{(I + t \nabla F(\rho^k) )\, \rho^k \,(I + t \nabla F(\rho^k) )}} \equiv G(\rho^k).
\end{equation} 

The expression above can be seen as a fixed point iteration. Expanding that expression, we obtain
\begin{equation}
\label{gdef}
G(\rho) = \frac{\rho + t\,(\nabla F(\rho) \rho + \rho \nabla F(\rho) ) + t^2\, \nabla F(\rho)\rho \nabla F(\rho)}{1 + 2t + t^2 \trace{\nabla F(\rho) \rho \nabla F(\rho)}}.
\end{equation}

Notice that $\rho^*$ is a fixed point of $G(\rho)$, $G(\rho^*)=\rho^*$, for $t>0$, if the following conditions are satisfied
\begin{eqnarray}
\rho^* & = & \nabla F(\rho^*) \rho^* \nabla F(\rho^*) = \nabla F(\rho^*) \rho^*.
\end{eqnarray}
If the above conditions are verified at a positive definite trace one matrix $\bar{\rho}$, then the {\it optimality conditions} \cite{bertsekas1999,nocedal1999} for the problem \refe{mlprob} are satisfied and $\rho^*$ is the maximum likelihood estimate. 

The following two lemmas are useful when concerning the $R \rho R$ iterations.

\begin{lema}\label{lemtrgradrho}
For all $\rho \in \interior{{\cal S}}$, we have
$$
\trace{\nabla F(\rho)\rho} = 1.
$$
\end{lema}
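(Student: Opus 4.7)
My plan is to use the explicit formula for $\nabla F(\rho)$ from equation \refe{gradr}, namely $\nabla F(\rho) = R(\rho) = \sum_i \frac{f_i}{\trace{E_i \rho}} E_i$, together with the linearity and cyclicity of the trace. Substituting this into $\trace{\nabla F(\rho)\rho}$ and pulling the scalar sum outside gives
\[
\trace{\nabla F(\rho)\rho} = \sum_i \frac{f_i}{\trace{E_i \rho}} \trace{E_i \rho} = \sum_i f_i,
\]
where the quotient is well defined precisely because $\rho \in \interior{{\cal S}}$ implies $\rho \succ 0$ and hence $\trace{E_i \rho}>0$ for every nonzero POVM element $E_i$.

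To finish, I would argue that $\sum_i f_i = 1$. This is just the fact that the $f_i$ are relative frequencies (counts divided by the total number of trials $N$ in the likelihood ${\cal L}(\rho)\propto \prod_i p_i(\rho)^{N f_i}$). It can also be cross-checked for consistency against the POVM normalization $\sum_i E_i = I$: the true probabilities satisfy $\sum_i p_i(\rho) = \trace{(\sum_i E_i)\rho} = \trace{\rho} = 1$, so the empirical frequencies $f_i$ that estimate them must sum to one as well. This single observation is really the only nontrivial input; everything else is a one-line trace manipulation, so I expect no serious obstacle in the proof.
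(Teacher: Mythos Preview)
Your proposal is correct and is exactly the computation the paper has in mind: the paper's own proof consists of the single phrase ``Directly from \refe{gradr},'' and you have simply written out that one-line trace manipulation together with the implicit normalization $\sum_i f_i = 1$.
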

\begin{proof}
Directly from \refe{gradr}.
\end{proof}
\begin{lema}\label{lemtrgradrhograd}
If $\rho \in \interior{{\cal S}}$, then
$$
\trace{\nabla F(\rho)\rho \nabla F(\rho)} \ge 1,
$$
with equality if and only if $\rho = \nabla F(\rho) \rho$.
\end{lema}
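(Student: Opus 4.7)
My plan is to obtain the inequality from a Cauchy--Schwarz estimate in the Hilbert--Schmidt (Frobenius) inner product, together with the normalization provided by Lemma \ref{lemtrgradrho}.

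First I would note that for $\rho\in\interior{{\cal S}}$ the positive square root $\rho^{1/2}$ is well defined and Hermitian, and that $\nabla F(\rho)=R(\rho)$ is Hermitian (being a nonnegative real combination of the Hermitian operators $E_i$). I would then set $X=\rho^{1/2}$ and $Y=\nabla F(\rho)\,\rho^{1/2}$ and compute, using cyclicity of the trace, the three Hilbert--Schmidt quantities
\begin{equation*}
\trace{X^{\dagger}X}=\trace{\rho}=1,\qquad \trace{X^{\dagger}Y}=\trace{\rho^{1/2}\nabla F(\rho)\rho^{1/2}}=\trace{\nabla F(\rho)\rho}=1,
\end{equation*}
\begin{equation*}
\trace{Y^{\dagger}Y}=\trace{\rho^{1/2}\nabla F(\rho)^{2}\rho^{1/2}}=\trace{\nabla F(\rho)\,\rho\,\nabla F(\rho)},
\end{equation*}
where the second identity uses Lemma \ref{lemtrgradrho}.

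The Cauchy--Schwarz inequality $|\trace{X^{\dagger}Y}|^{2}\le \trace{X^{\dagger}X}\trace{Y^{\dagger}Y}$ then yields immediately
\begin{equation*}
1=|\trace{X^{\dagger}Y}|^{2}\le 1\cdot\trace{\nabla F(\rho)\,\rho\,\nabla F(\rho)},
\end{equation*}
which is the desired bound. For the equality case, I would invoke the standard characterization of equality in Cauchy--Schwarz: it holds iff $Y=cX$ for some scalar $c$, i.e.\ $\nabla F(\rho)\rho^{1/2}=c\,\rho^{1/2}$. Multiplying on the right by $\rho^{1/2}$ gives $\nabla F(\rho)\rho=c\rho$, and taking the trace together with Lemma \ref{lemtrgradrho} forces $c=1$, so the equality reduces exactly to $\nabla F(\rho)\rho=\rho$. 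Conversely, if $\nabla F(\rho)\rho=\rho$ then $\trace{\nabla F(\rho)\rho\nabla F(\rho)}=\trace{\rho\nabla F(\rho)}=1$ by Lemma \ref{lemtrgradrho}.

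There is no serious obstacle here; the only thing to be mindful of is the use of Hermiticity of $\nabla F(\rho)$ and of $\rho^{1/2}$ when moving daggers around, and the cyclic invariance of the trace to rewrite $\trace{\nabla F(\rho)\rho\nabla F(\rho)}$ as a squared Hilbert--Schmidt norm. If one preferred a more elementary presentation, the same bound could be obtained by applying the scalar Cauchy--Schwarz inequality to the eigen-decomposition of the Hermitian matrix $A=\rho^{1/2}\nabla F(\rho)\rho^{1/2}$, since $\trace{A}=1$, $\trace{A^{2}}=\trace{\nabla F(\rho)\rho\nabla F(\rho)}$ (after one cyclic rotation together with insertion of $\rho^{1/2}\rho^{-1/2}$), and $(\sum\lambda_{i})^{2}\le n\sum\lambda_{i}^{2}$ applied after normalizing by $\trace{\rho}=1$; but the Hilbert--Schmidt version above is cleaner and makes the equality case transparent.
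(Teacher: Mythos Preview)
Your main argument is correct and essentially identical to the paper's: the paper also writes $1=\trace{\nabla F(\rho)\rho^{1/2}\rho^{1/2}}$, applies the Cauchy--Schwarz inequality to the pair $(\nabla F(\rho)\rho^{1/2},\rho^{1/2})$, and identifies the equality case as $\nabla F(\rho)\rho^{1/2}=\alpha\rho^{1/2}$. Your handling of the equality clause (pinning down $c=1$ via Lemma~\ref{lemtrgradrho} and checking the converse) is in fact a little more complete than the paper's.

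One caution: the ``more elementary'' alternative you sketch at the end does not work as stated. With $A=\rho^{1/2}\nabla F(\rho)\rho^{1/2}$ one gets $\trace{A^2}=\trace{\nabla F(\rho)\rho\nabla F(\rho)\rho}$, not $\trace{\nabla F(\rho)\rho\nabla F(\rho)}$, and the scalar inequality $(\sum_i\lambda_i)^2\le n\sum_i\lambda_i^2$ would only yield $1\le n\,\trace{A^2}$, which is weaker than what you need. Since your Hilbert--Schmidt argument already does the job cleanly, I would simply drop that paragraph.
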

\begin{proof} From Lemma \ref{lemtrgradrho},
$$
1 = \trace{\nabla F(\rho) \rho} = \trace{\nabla F(\rho) \rho^{1/2} \rho^{1/2}},
$$
and from the Cauchy-Schwarz inequality, we obtain
$$
1 = \abs{\trace{\nabla F(\rho) \rho^{1/2} \rho^{1/2}}}^2 \le \trace{\nabla F(\rho)\rho \nabla F(\rho)}\trace{\rho} = \trace{\nabla F(\rho)\rho \nabla F(\rho)}.
$$
\noindent The equality in Cauchy-Schwarz occurs when $\nabla F(\rho) \rho^{1/2} = \alpha \rho^{1/2}$, or  equivalently, when $\nabla(\rho) \rho = \rho$.
\end{proof}
\noindent Let us simplify the expression \refe{gdef}, defining for some $\rho$,
$$
q(t) = 1 + 2t + t^2 \trace{\nabla F(\rho) \rho \nabla F(\rho)}.
$$
Since $\trace{\nabla F(\rho) \rho \nabla F(\rho)} \ge 1$, for any density matrix $\rho$, we have that $q(t) \ge 1$ for all $t \ge 0$. Furthermore, if $\rho \in {\cal S}$, the set of density matrices, $G(\rho) \in {\cal S}$ as well, for any $t \ge 0$. Thus, $G(\rho)$ defines a path on the density matrices space ${\cal S}$, parameterized by $t$ such that, when $t \rightarrow 0$, $G(\rho) \rightarrow \rho$, and when $t \rightarrow \infty$, $G(\rho) \rightarrow \tilde{\rho}$, where
$$
\tilde{\rho} = \frac{\nabla F(\rho) \rho \nabla F(\rho)}{\trace{\nabla F(\rho) \rho \nabla F(\rho)}},
$$
as in the original $R \rho R$ algorithm \cite{hradil1997}. 

Let us also define the point
\begin{equation}
\label{rhobar}
\bar{\rho} = \frac{\nabla F(\rho) \rho + \rho \nabla F(\rho)}{2}.
\end{equation}
Unlike the point $\tilde{\rho}$, the point $\bar{\rho}$, in general, is not in the set ${\cal S}$. 

Now, rewriting the expression \refe{gdef}, we obtain
\begin{equation}
\label{gexp}
G(\rho) = \frac{1}{q(t)} \rho + \frac{2t}{q(t)} \left( \frac{\nabla F(\rho) \rho + \rho \nabla F(\rho) }{2} \right) + \frac{t^2 \trace{\nabla F(\rho) \rho \nabla F(\rho)} }{q(t)} \frac{\nabla F(\rho) \rho \nabla F(\rho)}{\trace{\nabla F(\rho) \rho \nabla F(\rho)}},
\end{equation}
that is
$$
G(\rho) = \frac{1}{q(t)} \rho + \frac{2t}{q(t)} \bar{\rho} + \frac{t^2 \trace{\nabla F(\rho) \rho \nabla F(\rho)} }{q(t)} \tilde{\rho}.
$$
Therefore, we have a convex combination of the points $\rho,\ \bar{\rho}$, $\tilde{\rho}$, and the path defined by $t$ is in the convex set whose extreme points are $\rho,\ \bar{\rho}$ and $\tilde{\rho}$, as we can see in Figure \ref{figgeo}. \\
\begin{figure}[!h]
\centering
\begin{tikzpicture}[scale=0.7]
\filldraw[draw=black, fill=white] (0,0) circle (80pt);
\filldraw (40pt,-30pt) circle (2pt) node[below]{$\rho^k$};
\filldraw (100pt,0pt) circle (2pt) node[below]{$\bar{\rho}^k$};
\filldraw (60pt,30pt) circle (2pt) node[above]{$\tilde{\rho}^k$};
\draw[densely dotted] (40pt,-30pt) -- (100pt,0pt) -- (60pt,30pt) -- (40pt,-30pt);
\draw[->,gray,thick] (40pt,-30pt) -- (98pt,-1pt);
\draw[->,gray,thick] (40pt,-30pt) -- (59pt,27pt);
\draw (40pt,-30pt) .. controls (100pt,0pt) and (70pt,22pt) .. (60pt,30pt);
\draw (75pt,-15pt) node[below]{$\bar{D}^k$};
\draw (40pt,20pt) node[below]{$\tilde{D}^k$};
\draw (-170pt,-100pt) rectangle (170pt,100pt);
\draw (-160pt,-90pt)[right]node {$\trace{\rho}=1$};
\draw[->] (50pt,40pt) -- (30pt,50pt) node[left]{{\small $(t \rightarrow \infty)$}};
\draw[->] (30pt,-35pt) -- (20pt,-40pt) node[left]{{\small $(t = 0)$}};
\draw[->] (75pt,15pt) -- (95pt,35pt) node[right]{{\small $\rho^{k+1}(t)$}};

\draw (0pt,0pt) node {${\cal S}$};
\end{tikzpicture}
\caption{Geometrical interpretation of $G(\rho^k)$ as a curved path parametrized by $t$.}
\label{figgeo}
\end{figure}
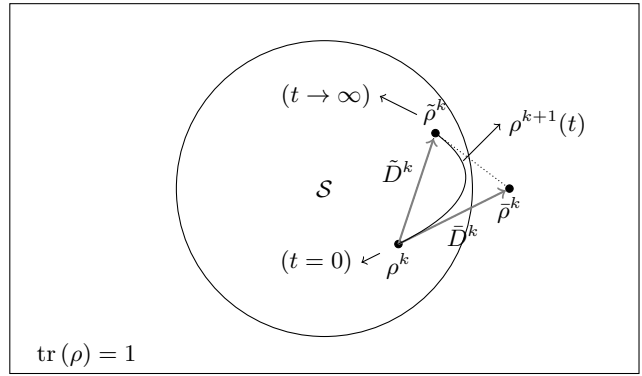

Finally, defining the directions
\begin{eqnarray}
\bar{D} & = & \bar{\rho} - \rho = \frac{\nabla f (\rho) \rho + \rho \nabla F(\rho) }{2} - \rho, \label{dbar} \\
\tilde{D} & = & \tilde{\rho} - \rho = \frac{\nabla F(\rho) \rho \nabla F(\rho)}{\trace{\nabla F(\rho) \rho \nabla F(\rho)}} - \rho, \label{dtilde}
\end{eqnarray}
and using \refe{gexp}, we obtain
$$
G(\rho) = \rho + \frac{2t}{q(t)}\bar{D} + \frac{t^2 \trace{\nabla F(\rho) \rho \nabla F(\rho)}}{q(t)} \tilde{D},
$$
which provides us an iteration like \refe{iteration}
$$
\hat{\rho} = G(\rho) = \rho + t D,
$$
where
\begin{equation}
\label{direction}
D = \frac{2}{q(t)}\bar{D} + \frac{t \, \trace{\nabla F(\rho) \rho \nabla F(\rho)}}{q(t)} \tilde{D}.
\end{equation}
The search direction $D$ is a combination of the directions $\bar{D}$ and $\tilde{D}$ controlled by the parameter $t$. From Figure \ref{figgeo}, we can see that as $t \rightarrow \infty$, $D$ goes to the direction $\tilde{D}$,  whereas $t \rightarrow 0$, $D$ becomes parallel to $\bar{D}$. It is worth to prove that these are feasible ascent directions. 
\begin{prop}
The directions $\bar{D}$ and $\tilde{D}$ are feasible ascent directions for any nonstationary point $\rho$.
\end{prop}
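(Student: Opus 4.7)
The plan is to verify the ascent and feasibility conditions separately for $\bar D$ and $\tilde D$. Writing $R := \nabla F(\rho)$ throughout, I would identify ``nonstationary'' with $R\rho \neq \rho$, the negation of the optimality condition \refe{extremal} available to us since $\rho \in \interior{{\cal S}}$. This puts Lemmas \ref{lemtrgradrho} and \ref{lemtrgradrhograd} directly at my disposal.

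For the direction $\bar D$, I would expand using \refe{dbar} and invoke cyclicity of the trace twice to get
$$
\trace{R\bar D} \;=\; \tfrac12\trace{R^2\rho} + \tfrac12\trace{R\rho R} - \trace{R\rho} \;=\; \trace{R\rho R} - 1,
$$
where the last equality uses Lemma \ref{lemtrgradrho}. Lemma \ref{lemtrgradrhograd} then supplies strict positivity whenever $R\rho \neq \rho$.

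The ascent for $\tilde D$ is the delicate step. From \refe{dtilde}, $\trace{R\tilde D} = \trace{R^2\rho R}/\trace{R\rho R} - 1$, so I need $\trace{R^2\rho R} > \trace{R\rho R}$. My plan is to apply the Cauchy-Schwarz inequality in the Frobenius inner product to $A = R^{1/2}\rho^{1/2}$ and $B = R^{3/2}\rho^{1/2}$, both well defined because $R \succeq 0$. A direct computation gives $\trace{A^* A} = \trace{R\rho} = 1$, $\trace{B^* B} = \trace{R^3\rho}$, and $\trace{A^* B} = \trace{R^2\rho}$, so Cauchy-Schwarz yields $(\trace{R^2\rho})^2 \leq \trace{R^3\rho}$. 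Cyclicity rewrites this as $\trace{R^2\rho R} \geq (\trace{R\rho R})^2$, whence $\trace{R\tilde D} \geq \trace{R\rho R} - 1 > 0$ at nonstationary $\rho$, once more by Lemma \ref{lemtrgradrhograd}. Locating this particular Cauchy-Schwarz pair is the main obstacle.

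For feasibility, both $\bar D$ and $\tilde D$ are Hermitian by construction, and Lemma \ref{lemtrgradrho} together with $\trace{\rho} = 1$ yields $\trace{\bar D} = \trace{\tilde D} = 0$, so $\rho + tD$ retains unit trace for every $t$. Since $\rho \succ 0$ and the positive-definite cone is open in the Hermitian matrices, there exists $\eps > 0$ such that $\rho + tD \succ 0$ for $t \in (0,\eps)$, which is exactly the feasibility requirement in the definition of feasible ascent direction.
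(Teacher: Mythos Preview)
Your proof is correct and follows essentially the same route as the paper: for $\bar D$ you both reduce to $\trace{R\rho R}-1>0$ via Lemma~\ref{lemtrgradrhograd}, and for feasibility you use openness of the positive-definite cone at $\rho\in\interior{\cal S}$ (the paper additionally observes that for $\tilde D$ convexity of ${\cal S}$ gives feasibility on all of $(0,1]$, since $\tilde\rho\in{\cal S}$). For $\tilde D$ the paper merely asserts $\trace{R^2\rho R}/\trace{R\rho R}>1$ ``in a similar way'' without spelling it out, so your explicit Cauchy--Schwarz pairing $A=R^{1/2}\rho^{1/2}$, $B=R^{3/2}\rho^{1/2}$ yielding $\trace{R\tilde D}\ge \trace{R\rho R}-1>0$ actually supplies a detail the paper leaves implicit.
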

\begin{proof}
To prove that $\bar{D}$ is an ascent direction, we need to show that $\trace{\nabla F(\rho) \bar{D}}>0$. Using the definition of $\bar{D}$, we get
$$
\bar{D} = \bar{\rho} - \rho = \frac{\nabla F(\rho) \rho + \rho \nabla F(\rho)}{2} - \rho.
$$
For a nonstationary $\rho$ ($\nabla F(\rho) \rho \ne \rho$),
$$
\trace{\nabla F(\rho) \bar{D}} = \trace{\nabla F(\rho) \rho \nabla F(\rho) } - \trace{\nabla F(\rho) \rho} =
$$
$$
\trace{\nabla F(\rho) \rho \nabla F(\rho) } - 1 > 0,
$$
by the Cauchy-Schwarz inequality, which implies that $\bar{D}$ is an ascent direction. If $\rho \in \interior{\cal S}$, then there exists $t>0$ such that $\rho + t \bar{D} \in {\cal S}$, so the direction is feasible. 

In a similar way for $\tilde{D}$,
$$ 
\trace{\nabla F(\rho) \tilde{D}} = \frac{ \trace{\nabla F(\rho) \nabla F(\rho) \rho \nabla F(\rho) } }{\trace{\nabla F(\rho) \rho \nabla F(\rho) }} - 1 > 0.
$$
Using the fact that $\tilde{\rho} \in {\cal S}$, for $t \in (0,1]$, we get $\rho + t \tilde{D} \in {\cal S}$ as well.
\end{proof}
\noindent Since the direction $D$ is a positive combination of feasible ascent directions, it is also a feasible ascent direction. 

\noindent Now, if we can show that the sequence of directions $\set{D^k}$ is gradient related, then we can prove the global convergence under an inexact line search scheme. First, we present some technical lemmas which are useful to show the desired result. 

\begin{lema}\label{lemarhobar}
For $\rho^k \succ 0$ and $\trace{\rho^k}=1$, the matrix $\bar{\rho}^k$, defined in \refe{rhobar}, is the solution of the problem
\begin{equation}\label{problemarhobar}
\begin{aligned}
\max_{\rho} & \ \ \  \trace{\nabla F(\rho^k) (\rho - \rho^k)} - \frac{1}{2}\trace{(\rho - \rho^k)({\rho^k})^{-1}(\rho - \rho^k)} \\
\mbox{s.t} & \ \ \ \trace{\rho} = 1.
\end{aligned}
\end{equation}
\end{lema}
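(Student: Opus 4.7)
The plan is to exploit that \refe{problemarhobar} is a strictly concave quadratic program with a single affine constraint, so that its unique global maximizer coincides with its unique KKT point; I would then verify that $\bar\rho^k$ is that KKT point.

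\emph{Step 1 (structure).} Because $\rho^k \in \interior{{\cal S}}$, we have $\rho^k \succ 0$ and hence $(\rho^k)^{-1} \succ 0$, which makes the quadratic form $\trace{(\rho - \rho^k)(\rho^k)^{-1}(\rho - \rho^k)}$ strictly convex in $\rho$. The objective of \refe{problemarhobar} is therefore strictly concave, and the feasible set $\set{\rho : \trace{\rho} = 1}$ is an affine subspace, so Slater's condition holds trivially and the KKT conditions are both necessary and sufficient. Thus it suffices to check that $\bar\rho^k$ satisfies them.

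\emph{Step 2 (Lagrangian and stationarity).} Introducing a scalar multiplier $\lambda \in \R$ for the trace constraint, I would form
$$\mathcal L(\rho,\lambda) = \trace{\nabla F(\rho^k)(\rho - \rho^k)} - \tfrac{1}{2}\trace{(\rho - \rho^k)(\rho^k)^{-1}(\rho - \rho^k)} - \lambda(\trace{\rho} - 1).$$
Using the standard matrix-calculus identity $\nabla_X \trace{XAX} = AX + XA$ (valid for Hermitian $X$ and $A$), the stationarity condition $\nabla_\rho \mathcal L = 0$ becomes
$$\nabla F(\rho^k) - \tfrac{1}{2}\bigl[(\rho^k)^{-1}(\rho - \rho^k) + (\rho - \rho^k)(\rho^k)^{-1}\bigr] = \lambda I.$$

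\emph{Step 3 (verification at $\bar\rho^k$).} Feasibility is immediate from Lemma \ref{lemtrgradrho}, since $\trace{\bar\rho^k} = \tfrac{1}{2}\trace{\nabla F(\rho^k)\rho^k + \rho^k\nabla F(\rho^k)} = \trace{\nabla F(\rho^k)\rho^k} = 1$. It then remains to substitute $\rho = \bar\rho^k$ into the stationarity equation and identify the corresponding scalar multiplier $\lambda$; once that is done, the concavity from Step~1 promotes $\bar\rho^k$ from a KKT point to the unique global maximizer.

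The step I expect to be the main obstacle is the algebraic verification at the end of Step~3. Plugging $\bar\rho^k - \rho^k = \tfrac{1}{2}(\nabla F(\rho^k)\rho^k + \rho^k\nabla F(\rho^k)) - \rho^k$ into the stationarity equation produces conjugation-type terms such as $(\rho^k)^{-1}\nabla F(\rho^k)\rho^k$ and $\rho^k \nabla F(\rho^k)(\rho^k)^{-1}$, and the real work is to reorganize these expressions, using the symmetric structure of $\bar\rho^k$ together with cyclicity of the trace, so that the left-hand side of the stationarity equation reduces to a scalar multiple of the identity.
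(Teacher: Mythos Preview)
Your approach is exactly the paper's: write the KKT stationarity condition for \refe{problemarhobar} and then verify it at $\bar\rho^k$. Your instinct that Step~3 is the obstacle is well placed, and in fact the verification cannot be carried out as you hope. Substituting $\rho=\bar\rho^k$ and simplifying, your stationarity equation becomes
\[
\tfrac{1}{2}\nabla F(\rho^k)-\tfrac{1}{4}\Bigl[(\rho^k)^{-1}\nabla F(\rho^k)\,\rho^k+\rho^k\,\nabla F(\rho^k)(\rho^k)^{-1}\Bigr]+I=\lambda I,
\]
so you would need $(\rho^k)^{-1}\nabla F(\rho^k)\rho^k+\rho^k\nabla F(\rho^k)(\rho^k)^{-1}=2\nabla F(\rho^k)$. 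This identity holds if and only if $[\rho^k,\nabla F(\rho^k)]=0$; for generic $\rho^k$ it fails (take $\rho^k=\mathrm{diag}(1/3,2/3)$ and any $\nabla F(\rho^k)$ with a nonzero off-diagonal entry). The unique KKT point of \refe{problemarhobar} is the solution of the Lyapunov equation $(\rho^k)^{-1}\rho+\rho(\rho^k)^{-1}=2\nabla F(\rho^k)$, which agrees with $\bar\rho^k$ on the diagonal blocks of the eigenbasis of $\rho^k$ but not off-diagonally.

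The paper's own proof arrives at the same Lyapunov-type equation and then asserts, ``using the symmetry of the solution $\rho$, the symmetry of $\rho^k$, and $\nabla F(\rho^k)$,'' that $\rho=\nabla F(\rho^k)\rho^k=\rho^k\nabla F(\rho^k)$; that step silently presumes the commutativity above and is not valid in general. So your plan mirrors the paper faithfully, and the gap you anticipated in Step~3 is a genuine one shared by the original argument rather than a difficulty you could close by further algebraic reorganization.
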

\begin{proof}
Consider the optimality conditions for \refe{problemarhobar}:
\begin{equation}\label{kkt1}
-\nabla F(\rho^k) + \frac{1}{2}\left[(\rho - \rho^k)(\rho^k)^{-1} + (\rho^k)^{-1}(\rho - \rho^k)\right] + \lambda_0 I  = 0,
\end{equation}
\begin{equation}\label{kkt2}
\trace{\rho} = 1.
\end{equation}
In equation \refe{kkt1}, multiplying at the right by $\rho^k$ and taking the trace, we have
$$
-\trace{ \nabla F(\rho^k) \rho^k} + \trace{\rho - \rho^k} + \lambda_0 \trace{\rho^k} = 0,
$$
which implies that $\lambda_0 = 1$. So, from
$$
-\nabla F(\rho^k) + \frac{1}{2}\left[(\rho - \rho^k)(\rho^k)^{-1} + (\rho^k)^{-1}(\rho - \rho^k)\right] + I  = 0,
$$
we obtain
$$
\rho (\rho^k)^{-1} + (\rho^k)^{-1} \rho = \nabla F(\rho^k).
$$
Using the symmetry of the solution $\rho$, the symmetry of $\rho^k$, and $\nabla F(\rho^k)$, we conclude that
$$
\rho = \nabla F(\rho^k) \rho^k = \rho^k \nabla F(\rho^k) = \frac{\nabla F(\rho^k) \rho^k + \rho^k \nabla F(\rho^k)}{2} = \bar{\rho}^k.
$$
\end{proof}

\begin{lema}\label{lemagr}
The sequence of directions $\set{\bar{D}^k}$, used to define the sequence $\set{\rho^k}$ by
$$
\rho^{k+1} = \rho^k + t\, \bar{D}^k,
$$
satisfies
$$
\lim_{k \rightarrow \infty} \inf_{k \in {\cal K}} \trace{\nabla F(\rho^k)(\bar{\rho}^k - \rho^k)} > 0,
$$
for all subsequence $\set{\rho^k}_{k \in {\cal K}}$ that converges to a non-stationary point $\rho'$.
\end{lema}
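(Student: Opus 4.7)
The plan is to reduce $\trace{\nabla F(\rho^k)\bar{D}^k}$ to the scalar $\trace{\nabla F(\rho^k)\rho^k\nabla F(\rho^k)} - 1$, pass this to the limit along $\cal K$ by continuity, and apply Lemma~\ref{lemtrgradrhograd} at $\rho'$ to obtain strict positivity. For the first step, I would use \refe{dbar} together with cyclicity of the trace and Hermiticity of $\rho^k$ and $\nabla F(\rho^k)$ to get
$$\trace{\nabla F(\rho^k)(\bar{\rho}^k - \rho^k)} = \trace{\nabla F(\rho^k)\rho^k\nabla F(\rho^k)} - \trace{\nabla F(\rho^k)\rho^k}.$$
Lemma~\ref{lemtrgradrho} asserts that the second trace equals $1$, which yields
$$\trace{\nabla F(\rho^k)(\bar{\rho}^k - \rho^k)} = \trace{\nabla F(\rho^k)\rho^k\nabla F(\rho^k)} - 1.$$
This identity has already appeared in the proof that $\bar{D}$ is an ascent direction, so no new computation is required.

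Next I would pass to the limit along $\set{\rho^k}_{k\in{\cal K}}\to\rho'$. Assuming $\rho' \in \interior{\cal S}$, each $\trace{E_i\rho'}$ is strictly positive, so $\nabla F(\cdot) = \sum_i (f_i/\trace{E_i\,\cdot})E_i$ is continuous at $\rho'$, and the quadratic map $\rho\mapsto \trace{\nabla F(\rho)\rho\nabla F(\rho)}$ inherits this continuity. Hence the full limit along $\cal K$ exists and equals $\trace{\nabla F(\rho')\rho'\nabla F(\rho')} - 1$. Since $\rho'$ is non-stationary we have $\nabla F(\rho')\rho' \ne \rho'$, so the equality case in Lemma~\ref{lemtrgradrhograd} is excluded and this limit is strictly positive. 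The $\liminf$ appearing in the statement of the lemma therefore coincides with a strictly positive limit, as required.

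The only delicate point I anticipate is justifying that the accumulation point $\rho'$ actually lies in $\interior{\cal S}$: the iterates $\rho^k$ stay in $\interior{\cal S}$ by the positivity-preserving form of $G$, but in principle a subsequence could drift to the boundary where $\nabla F$ is singular. This is handled implicitly by reading "non-stationary point" in the sense of the ascent-direction framework of Section~\ref{globaltheory}, which presupposes that $\nabla F(\rho')$ is defined and that a feasible ascent direction exists at $\rho'$; under this convention $\rho' \succ 0$ by hypothesis, and the continuity argument above closes the proof.
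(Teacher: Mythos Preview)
Your argument is correct and in fact more economical than the paper's. Both proofs reach the same endpoint---strict positivity of the limiting quantity at a non-stationary $\rho'$---but via different identities for $\trace{\nabla F(\rho^k)\bar{D}^k}$.

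The paper invokes Lemma~\ref{lemarhobar}, the variational characterization of $\bar{\rho}^k$ as the maximizer of a quadratic model on the trace-one hyperplane, and extracts from the KKT conditions the identity
\[
\trace{\nabla F(\rho^k)(\bar{\rho}^k - \rho^k)} \;=\; \trace{(\rho^k - \bar{\rho}^k)(\rho^k)^{-1}(\rho^k - \bar{\rho}^k)} \;=\; \norm{\rho^k - \bar{\rho}^k}_{(\rho^k)^{-1}}^2,
\]
then passes to the limit in this weighted norm. You instead use the elementary trace computation
\[
\trace{\nabla F(\rho^k)(\bar{\rho}^k - \rho^k)} \;=\; \trace{\nabla F(\rho^k)\rho^k\nabla F(\rho^k)} - 1,
\]
which was already established in the proof of Proposition~4.3, and conclude via Lemma~\ref{lemtrgradrhograd}. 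Your route bypasses Lemma~\ref{lemarhobar} entirely and needs only cyclicity of the trace and the Cauchy--Schwarz equality case. The paper's route, by contrast, exhibits the connection to scaled-gradient methods and makes the role of the metric $(\rho^k)^{-1}$ explicit, which is conceptually informative even if not logically necessary here. Both proofs rely equally on $\rho'\in\interior{\cal S}$ (you need it for continuity of $\nabla F$ and the applicability of Lemma~\ref{lemtrgradrhograd}; the paper needs it for $(\rho')^{-1}$), and your closing paragraph handles this point in the same spirit as the paper's implicit assumption.
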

\begin{proof}
Suppose there is a subsequence $\set{\rho^k}_{k \in {\cal K}}$ that converges to a non-stationary point $\rho'$. Lemma \ref{lemarhobar} tell us that $\bar{\rho}^k$ is the solution of \refe{problemarhobar}. Thus, at  $\bar{\rho}^k$, the gradient of the objective function of \refe{problemarhobar} is orthogonal to the hyperplane $\trace{\rho}=1$, that is
$$
\trace{\left[\nabla F(\rho^k) - \frac{1}{2}\left( (\bar{\rho}^k - \rho^k)(\rho^k)^{-1} + (\rho^k)^{-1} (\bar{\rho}^k - \rho^k) \right) \right] (\rho - \bar{\rho}^k)} = 0, 
$$
\noindent $\forall \rho \mbox{ \ such that \ } \trace{\rho}=1$. Since the feasible set of \refe{problemarhobar} contains ${\cal S}$, we have
$$
\trace{\left[\nabla F(\rho^k) - \frac{1}{2}\left( (\bar{\rho}^k - \rho^k)(\rho^k)^{-1} + (\rho^k)^{-1} (\bar{\rho}^k - \rho^k) \right) \right] (\rho - \bar{\rho}^k)} = 0, \ \forall \rho \in {\cal S}.
$$
\noindent Expanding the last expression, we obtain
$$
\trace{\nabla F(\rho^k)(\rho - \bar{\rho}^k)} = - \frac{1}{2}\left[ \trace{(\rho^k - \bar{\rho}^k) (\rho^k)^{-1} (\rho - \bar{\rho}^k)} + \trace{(\rho - \bar{\rho}^k) (\rho^k)^{-1} (\rho^k - \bar{\rho}^k)} \right], 
$$
$\noindent \forall \rho \in {\cal S}$. In particular, for $\rho = \rho^k$,
\begin{equation}
\label{keyp}
\trace{\nabla F(\rho^k)(\bar{\rho}^k - \rho^k)} =  \trace{(\rho^k - \bar{\rho}^k) (\rho^k)^{-1} (\rho^k - \bar{\rho}^k)} = \norm{\rho^k - \bar{\rho}^k}_{(\rho^k)^{-1}}^2.
\end{equation}
Using the continuity of the solution given by Lemma \refe{lemarhobar}, we have
$$
\lim_{k \rightarrow \infty, \ k \in {\cal K}} \bar{\rho}^k = \bar{\rho} = \frac{\nabla F(\rho')\rho' + \rho' \nabla F(\rho')}{2}.
$$
Taking limits in \refe{keyp}, we obtain
$$
\lim_{k \rightarrow \infty} \inf_{k \in {\cal K}} \trace{\nabla F(\rho^k)(\bar{\rho}^k - \rho^k)} = \norm{\rho'  - \bar{\rho}}_{(\rho')^{-1}}^2 > 0.
$$
Since $\rho'$ is non-stationary, the right hand side of the above inequality is strictly positive and this completes the proof.
\end{proof} 

Finally, using the previous lemmas, we can prove the main assertion of this section.

\begin{prop}\label{main_prop}
The sequence of directions $\set{D^k}$ is gradient related.
\end{prop}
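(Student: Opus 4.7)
The plan is to bound $\trace{\nabla F(\rho^k) D^k}$ below by a positive quantity and verify boundedness of $\{D^k\}$ along any subsequence converging to a nonstationary point, exploiting the explicit decomposition \refe{direction} of $D^k$ into the two ascent pieces $\bar{D}^k$ and $\tilde{D}^k$ whose positivity was established in the preceding proposition.

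First I would apply linearity of the trace to \refe{direction} to obtain
$$
\trace{\nabla F(\rho^k) D^k} \;=\; \frac{2}{q(t_k)} \trace{\nabla F(\rho^k) \bar{D}^k} \;+\; \frac{t_k \trace{\nabla F(\rho^k) \rho^k \nabla F(\rho^k)}}{q(t_k)} \trace{\nabla F(\rho^k) \tilde{D}^k}.
$$
The previous proposition shows that both trace terms on the right are strictly positive at any nonstationary iterate, so the whole expression is bounded below by its first summand alone. I would then invoke Lemma \ref{lemagr} directly on $\trace{\nabla F(\rho^k) \bar{D}^k}$ to secure a strictly positive $\liminf$ along any subsequence $\{\rho^k\}_{k\in{\cal K}}$ converging to a nonstationary $\rho'$.

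To convert this into a strictly positive $\liminf$ for $\trace{\nabla F(\rho^k) D^k}$, I need a uniform positive lower bound on the prefactor $2/q(t_k)$, i.e.~a uniform upper bound on $q(t_k) = 1 + 2t_k + t_k^2 c_k$ with $c_k := \trace{\nabla F(\rho^k) \rho^k \nabla F(\rho^k)}$. This follows from two facts: (i) the backtracking Armijo procedure keeps $t_k$ inside a fixed interval $(0,\bar t\,]$, and (ii) $c_k$ stays bounded along ${\cal K}$, because the limit $\rho'$ lies in $\interior{\cal S}$ (the Armijo rule makes $F(\rho^k)$ monotone increasing, whereas $F$ diverges to $-\infty$ on the boundary of ${\cal S}$), so $\nabla F$ is continuous at $\rho'$ and $c_k \to c' < \infty$. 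Boundedness of $\{D^k\}_{k \in {\cal K}}$ rests on the same ingredients: $\bar{D}^k$ and $\tilde{D}^k$ are continuous functions of $\rho^k$ on a compact neighborhood of $\rho'$ inside $\interior{\cal S}$, while the coefficients $2/q(t_k) \le 2$ and $t_k c_k/q(t_k) \le \bar t \, c_k$ are uniformly bounded.

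The main obstacle I foresee is the nonstandard fact that $D^k$ itself depends on the chosen stepsize $t_k$, which is unusual for ascent direction methods; overcoming it forces us to lean simultaneously on the algorithmic control that $t_k \in (0,\bar t\,]$ and on the analytic fact that the logarithmic barrier behavior of $F$ together with the Armijo monotonicity rules out limit points of the iterates on the boundary, where $\nabla F$ would blow up and the continuity arguments would fail.
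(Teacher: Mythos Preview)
Your approach mirrors the paper's: both decompose $\trace{\nabla F(\rho^k) D^k}$ via \refe{direction}, discard the nonnegative $\tilde{D}^k$ contribution, control $2/q(t_k)$ from below using the algorithmic bound $t_k \le t_{\max}$, and then invoke Lemma~\ref{lemagr} for the strictly positive $\liminf$ of the $\bar{D}^k$ part. For boundedness of $\{D^k\}$ the paper argues more briefly, observing that $\rho^{k+1}(t)=G(\rho^k)\in{\cal S}$ for every $t\ge 0$ and specializing to $t=1$; your route through continuity of $\bar D^k,\tilde D^k$ at $\rho'$ together with the coefficient bounds $2/q(t_k)\le 2$ and $t_k c_k/q(t_k)\le \bar t\, c_k$ is a valid alternative and in fact more explicit about the dependence of $D^k$ on $t_k$.

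One inaccuracy worth fixing: $F$ does \emph{not} in general diverge to $-\infty$ on the boundary of ${\cal S}$ --- a rank-deficient $\rho$ need not make any $p_i(\rho)=\trace{E_i\rho}$ vanish --- so you cannot conclude $\rho'\in\interior{\cal S}$ from the barrier behavior of $F$. What Armijo monotonicity does give you is $F(\rho^k)\ge F(\rho^0)$, which forces each $p_i(\rho^k)$ with $f_i>0$ to stay bounded away from zero; hence $\nabla F(\rho^k)$ and $c_k=\trace{\nabla F(\rho^k)\rho^k\nabla F(\rho^k)}$ remain bounded along the subsequence. That is precisely what your bound on $q(t_k)$ and your continuity argument for $\bar D^k$ actually require, so the proof survives once the justification is corrected.
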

\begin{proof}
First, let us show that $\set{D^k}$ is bounded. In fact, $\rho^{k+1}(t_k) = \rho^k + t_k D^k = G(\rho^k)$ is in ${\cal S}$, since $\rho^k \succ 0$ and $t_k \ge 0$, by definition. In particular, for $t_k=1$, we have $\rho^{k+1}(1) = \rho^k + D^k \in {\cal S}$, and since ${\cal S}$ is bounded, then $\set{D^k}$ is also bounded. 

Now, let $\set{\rho^k}_{k \in {\cal K}}$ be a subsequence of the sequence $\set{\rho^k}$ generated by the iterations $\rho^{k+1} = \rho^k + t_k D^k$. Suppose $\set{\rho^k}_{k \in {\cal K}}$ converges to a nonstationary point $\rho'$. Using the definition of $D^k$, we obtain
$$
\trace{\nabla F(\rho^k) D^k} = \frac{2}{q(t_k)} \trace{\nabla F(\rho^k)\bar{D}^k} + \frac{t_k \trace{\nabla F(\rho^k) \rho^k \nabla F(\rho^k) }}{q(t_k)} \trace{\nabla F(\rho^k) \tilde{D}^k}.
$$
The second term in the right hand side is nonnegative, then
$$
\trace{\nabla F(\rho^k) D^k} \ge \frac{2}{q(t_k)} \trace{\nabla F(\rho^k)\bar{D}^k}.
$$
Considering $t_k \in (0,t_{max}]$, we have
$$
\trace{\nabla F(\rho^k) D^k} \ge \frac{2}{q(t_{max})} \trace{\nabla F(\rho^k)\bar{D}^k}.
$$
Taking the limit for a subsequence converging to a nonstationary point,
$$
\lim_{k \rightarrow \infty} \inf_{k \in {\cal K}} \trace{\nabla F(\rho^k) D^k} \ge \frac{2}{q(t_{max})} \lim_{k \rightarrow \infty} \inf_{k \in {\cal K}} \trace{\nabla F(\rho^k)\bar{D}^k},
$$
and since $\set{\bar{D}^k}$ is gradient related, by Lemma \ref{lemagr},
$$
\lim_{k \rightarrow \infty} \inf_{k \in {\cal K}} \trace{\nabla F(\rho^k) D^k} > 0,
$$
which implies that $\set{D^k}$ is gradient related.
\end{proof}
\noindent Thus, choosing the step size $t_k$ at each iteration, such that the Armijo condition \refe{armijo} is fulfilled, we obtain a globally convergent algorithm following the Theorem \ref{teogc}. \\
\ \\
In this way, we can define the steps of a globally convergent algorithm that uses an inexact line search as the  following: \\
\ \\
{\bf Algorithm~1} \\
\ \\
{\bf Step 0.} Given $\rho^0 \succ 0$ such that $\trace{\rho^0}=1$, $t_{max} > 0$ and $0<\alpha_0<\alpha_1<1$, set $k=0$ and $t_0 = t_{max}$. \\
\ \\
{\bf Step 1.} If some stopping criterion is verified, stop. Otherwise, compute the directions $\bar{D}^k$ and $\tilde{D}^k$, defined in \refe{dbar} and \refe{dtilde}. Set $t=\max \set{1,\ t_{k-1}}$. \\
\ \\
{\bf Step 2.} Set
$$
D = \left( \frac{2}{q(t)}\bar{D}^k + \frac{t \, \trace{\nabla F(\rho^k) \rho^k \nabla F(\rho^k)}}{q(t)} \tilde{D}^k \right).
$$
If
$$
F(\rho^k + t D) \le F(\rho^k) + \gamma\,t\, \trace{\nabla F(\rho^k) D},
$$
choose $t \in [\alpha_0\,t, \ \alpha_1\,t]$ and go to Step 2. \\
\ \\
{\bf Step 3.} Set $t_k = t$, $D^k = D$ and $\rho^{k+1} = \rho^k + t_k\, D^k$. Go to the step 1.\\
\ \\
The Theorem \ref{mainteo} states the desired result, that is, any limit point of the sequence generated by Algorithm~1 is a stationary point, regardless the initial approximation. Since the problem \refe{mlprob} is convex, then a stationary problem is also a solution.

\begin{teo}\label{mainteo}
Every limit point $\rho^*$ of a sequence $\set{\rho^k}$, generated by the Algorithm~1, is a stationary point, that is, $\nabla F(\rho^*) \rho^* = \rho^* = \rho^* \nabla F(\rho^*)$. 
\end{teo}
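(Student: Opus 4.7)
The plan is to observe that Theorem \ref{mainteo} is essentially a transcription of Theorem \ref{teogc} into the present context: almost all the technical work has already been absorbed into Proposition \ref{main_prop}. So my proof would first check that each hypothesis of Theorem \ref{teogc} holds for the sequence produced by Algorithm~1, and then convert the abstract notion of stationarity into the extremal equation $\nabla F(\rho^*) \rho^* = \rho^*$.

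First, I would verify feasibility. Since $\rho^0 \in \interior{\cal S}$ and the update $\rho^{k+1} = \rho^k + t_k D^k$ coincides with $G(\rho^k)$ from \refe{gdef}, the explicit formula together with $\nabla F(\rho^k) \succeq 0$ and $q(t_k) > 0$ shows that each $\rho^{k+1}$ is Hermitian, positive semidefinite, and of unit trace, so $\set{\rho^k} \subset {\cal S}$ and $\nabla F(\rho^k)$ remains well-defined at every step. Second, I would show that the backtracking loop in Step~2 terminates. The ascent proposition preceding \refe{direction} gives $\trace{\nabla F(\rho^k) D} > 0$ whenever $\rho^k$ is nonstationary, because $D$ is a positive combination of the feasible ascent directions $\bar{D}^k$ and $\tilde{D}^k$. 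Since $F$ is continuously differentiable and $\gamma \in (0,1)$, a first-order expansion of $F$ around $\rho^k$ forces the Armijo inequality to hold for all sufficiently small $t > 0$, and the reduction factor lies in $[\alpha_0,\alpha_1] \subset (0,1)$, so the loop exits in finitely many trials.

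With feasibility and termination in hand, Proposition \ref{main_prop} supplies the gradient related property of $\set{D^k}$, and Steps~2--3 implement the Armijo rule, so Theorem \ref{teogc} applies and every limit point $\rho^*$ of $\set{\rho^k}$ is stationary for \refe{mlprob}. To close, I would rule out the existence of feasible ascent directions at $\rho^*$: the ascent proposition shows that $\bar{D}^*$ and $\tilde{D}^*$ are feasible ascent directions unless $\nabla F(\rho^*) \rho^* = \rho^*$, which, combined with the hermiticity of $\rho^*$ and $\nabla F(\rho^*)$, yields $\rho^* \nabla F(\rho^*) = \rho^*$ as well and hence the claimed extremal equation. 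Concavity of $F$ on ${\cal S}$, already noted by the authors, then upgrades stationarity to global optimality.

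The step I expect to be the main obstacle is the subtle issue that the arguments of Lemma \ref{lemagr} and the ascent proposition require $\rho^k \succ 0$ in the limit — $\nabla F(\rho)$ is defined via $\trace{E_i \rho}$ in the denominator, and $(\rho^k)^{-1}$ appears explicitly in the norm on the right-hand side of \refe{keyp}. If a limit point had a vanishing eigenvalue, the gradient related certificate from Lemma \ref{lemagr} could degenerate. I would handle this by invoking monotonicity: $F(\rho^k)$ is strictly increasing along the iterates, and since $F(\rho) \to -\infty$ as any $\trace{E_i \rho} \to 0^+$ with $f_i > 0$, the sequence is bounded away from the problematic portion of the boundary, so the nonstationary limit points to which the gradient related argument is applied indeed lie in $\interior{\cal S}$.
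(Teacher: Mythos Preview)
Your proof is correct and follows exactly the paper's route: invoke Proposition \ref{main_prop} for the gradient related property, note that Algorithm~1 enforces the Armijo rule, and apply Theorem \ref{teogc}. You go further than the paper by explicitly checking feasibility, termination of the backtracking loop, the translation of abstract stationarity into the extremal equation $\nabla F(\rho^*)\rho^* = \rho^*$, and the boundary issue, all of which the paper's two-line proof leaves implicit.
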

\begin{proof}
Using the Proposition \ref{main_prop}, we have that $\set{D^k}_k$, used in Algorithm 1, is gradient related. Since the step selection in Algorithm 1 satisfies the Armijo condition, then we can apply the Theorem \ref{teogc} to obtain the claimed result.
\end{proof}
\noindent In the step 2 of Algorithm~1, instead of successive reductions of the steplength $t$, one could use, for instance, a quadratic or cubic interpolation \cite{nocedal1999} to estimate $t$ that maximizes $F(\rho^k + t D^k)$, in order to turn the search more effective.

\section{Illustrative examples}\label{examples}
In this section we selected two illustrative examples to show that Algorithm~1 outperforms the Diluted $R \rho R$ algorithm with fixed stepsize \cite{rehacek2007}. Besides Algorithm~1 converges in problems where the original $R \rho R$ does not, it also reduces the number of iterations when compared to the fixed stepsize version of the Diluted $R \rho R$, without harming the convergence behavior in cases where the last one works. 

First, we consider the counterexample where the pure $R \rho R$ method gets into a cycle \cite{rehacek2007}. Suppose we made three measurements on a qubit with the apparatus described by $\Pi_0 = \ket{0}\bra{0}$ and $\Pi_1 = \ket{1}\bra{1}$, detecting $\ket{0}$ once and $\ket{1}$ twice. We used the completely mixed state as starting point and considered convergence when the distance between two consecutive iterates is small enough (less than $10^ {-7}$). For each $t$ fixed in the Diluted $R \rho R$, we define $t_{max}=t$ in the algorithm that uses line search. We also used $\gamma=10^{-4}$ and $\alpha_0=\alpha_1=0.5$ in the Algorithm~1.
\begin{figure}[!h]
\centering
\epsfig{file=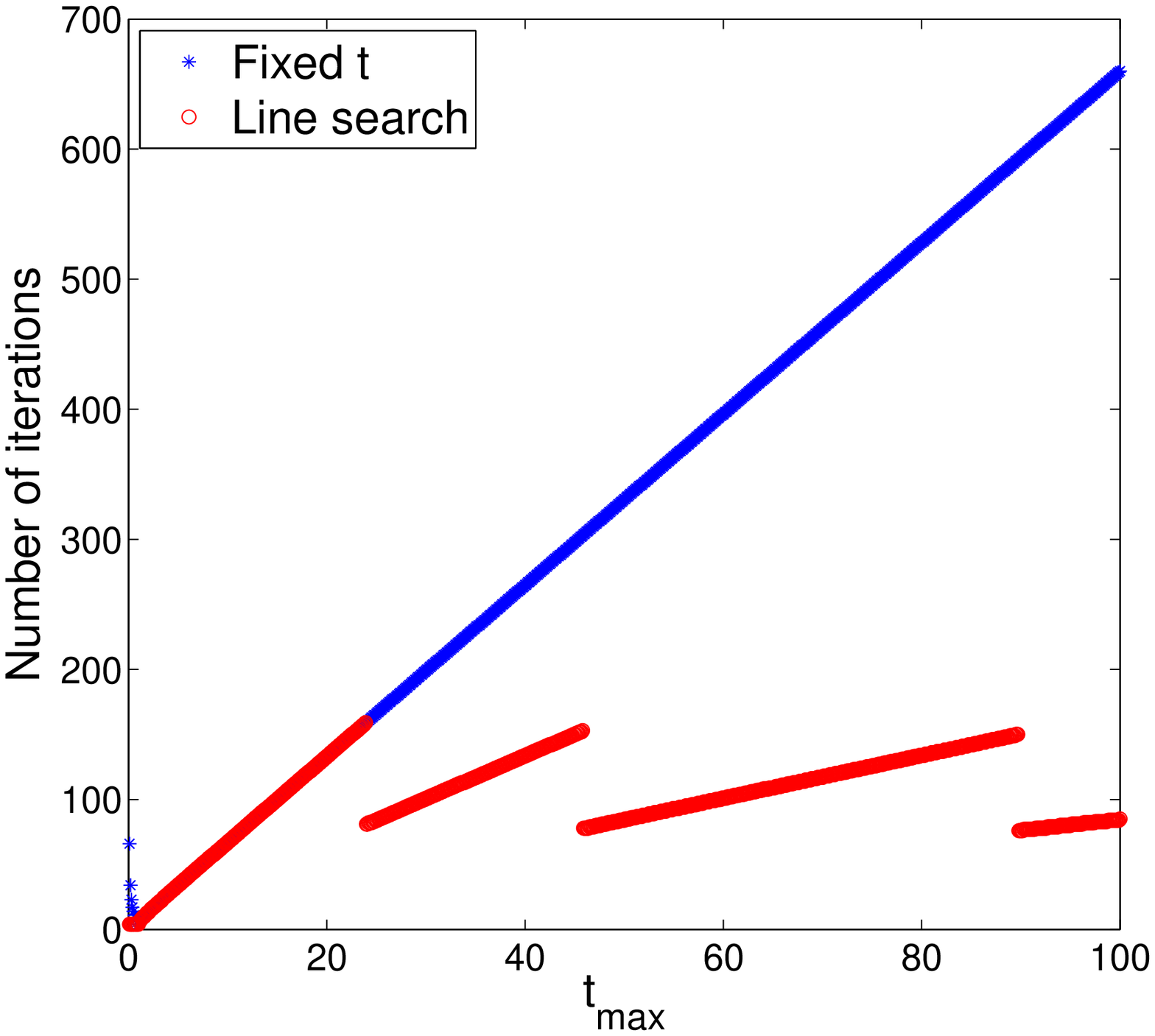,width=7cm}
\epsfig{file=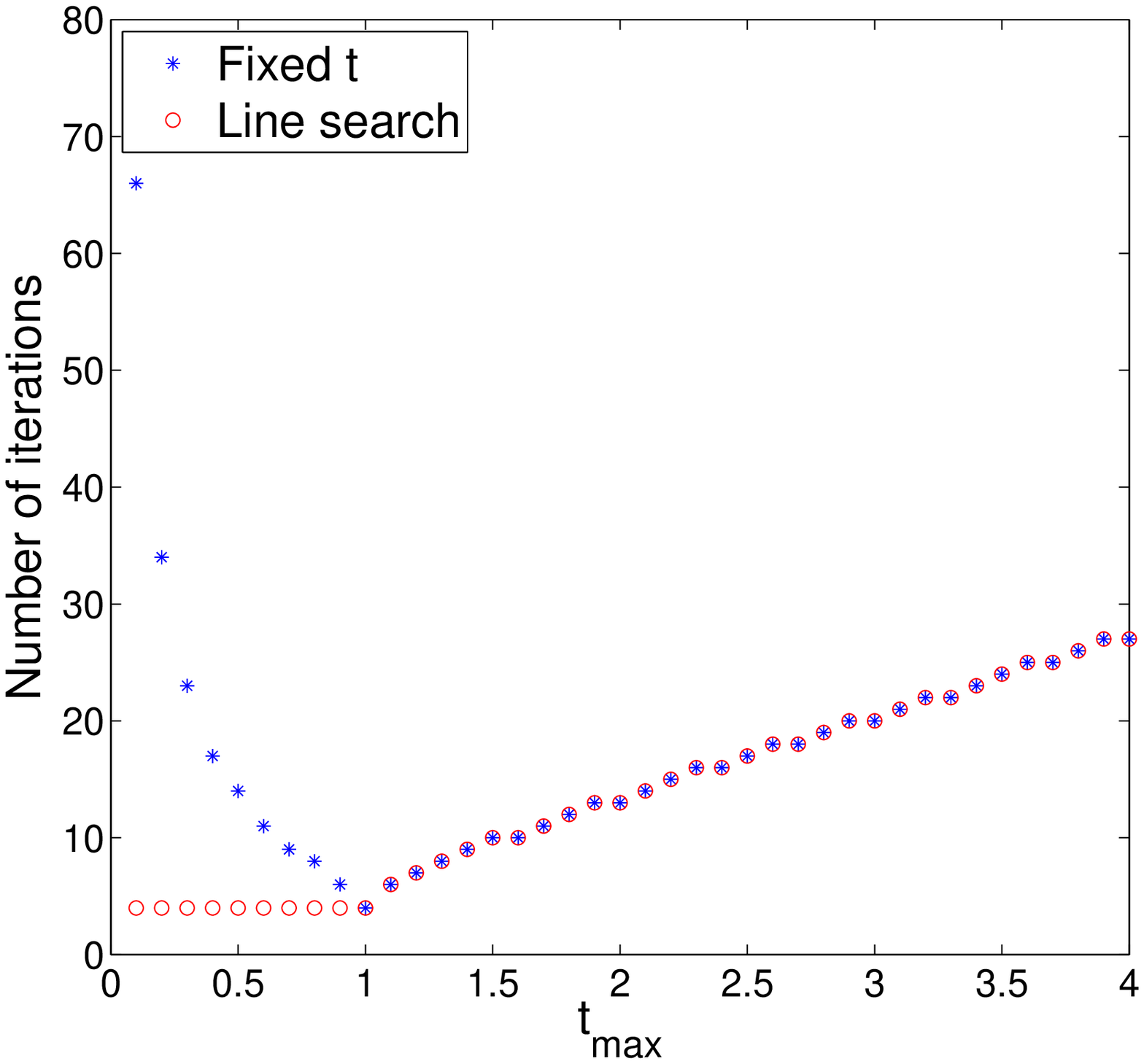,width=7cm}
\caption{Number of iterations as a function of $t$.}
\label{fig1}
\end{figure}\\
The Figure \ref{fig1} brings the comparison between the version with fixed step size (stars) against the one with line search (circles), described in the previous section. In the left panel, we can see that the number of iterations grows up as the stepsize $t$ increases, for the ``fixed  $t$'' strategy. This was expected because as $t \rightarrow \infty$, the iterations tend to be pure $R \rho R$ iterations, and in this limit case, there is no convergence. Conversely, the line search strategy keeps the number of iterations bounded, regardless the value of $t_{max}$. 

The right panel is a zoomed version of the left one, in order to show the behavior for small values of $t$. As expected, although the Diluted $R \rho R$ guarantees the monotonic increase of the likelihood for sufficient small steps, repeating too small steps leads to more iterations of the method. The Algorithm~1 ensures a substantial increase of the likelihood through the line search procedure. To avoid extremely small steps, at each iteration of the Algorithm~1, the first trial for $t_k$ is at least one.
\begin{figure}[!h]
\centering
\epsfig{file=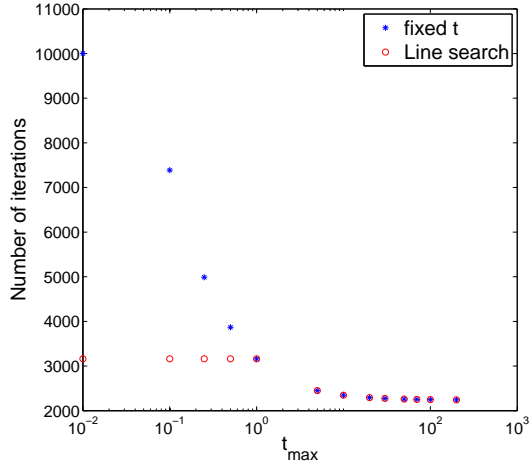,width=7.5cm}
\caption{Number of iterations as a function of $t$ (W state tomography).}
\label{fig2}
\end{figure}

Second, we consider as data the theoretical probabilities for the W state. The Figure \ref{fig2} presents the number of iterations for different values of $t$ (log-scale). Again, fixed small values of $t$ will produce a higher number of iterations. It is also important to note, in this example, that the behavior of the line search version is the same as the ``fixed $t$'' one, as the suggested step length $t_{max}$ increases. This means that in the Algorithm~1, the full step $t_k = t_{max}$ was accepted (fulfills the Armijo condition) in every iteration.  

In \cite{rehacek2007}, the authors claim that one should first try a larger value for the step size $t$ and perform Diluted $R \rho R$ iterations with the same value of $t$. If the iterations do not converge, then try a smaller value of $t$. This ad hoc procedure was motivated because the pattern of the Figure \ref{fig2} often occurs in practice, and then larger $t$ means less iterations. However, what should be a good guess for a larger value of $t$ in order to ensure few iterations? And if the convergence does not occur, how to choose a smaller value of $t$ to guarantee the convergence? These issues could result in a lot of re-runs until a good value of $t$ can be found, which can change from one dataset to another. 

These examples illustrate that the Armijo line search procedure represents an improvement on the Diluted $R \rho R$ algorithm, adjusting the step length $t$ just when necessary, and show that the convergence does not depend on a specific choice of a fixed step length or the starting point.

\section{Final remarks}\label{final}
We proved the global convergence of the Diluted $R \rho R$ algorithm under a line search procedure with Armijo condition. The inexact line search is a weaker assumption than the exact line search used in convergence proofs of a previous work \cite{rehacek2007}. Moreover, the proposed globalization by line search does not depend on 
 the guess of a fixed step length for all iterations. Instead, as usual in nonlinear optimization, the step length is adjusted just when necessary in order to ensure a sufficient improvement in the likelihood at each iteration. Thus, the Armijo line search procedure is a reliable globalization and represents a practical improvement in   the Diluted $R \rho R$ algorithm for quantum tomography.
 
 \section*{Acknowledgements}
We thanks to the Brazilian research agencies FAPESP, CNPq and INCT-IQ (National Institute for Science and Technology for Quantum Information). DG also thanks the Brittany Region (France) and INRIA for partial financial support.
\bibliographystyle{apsrev4-1}
\bibliography{steplength_diluted_rrhor}

\end{document}